
\documentclass[letterpaper, 10 pt, conference]{ieeeconf}  

\IEEEoverridecommandlockouts                              

\overrideIEEEmargins                                      




\usepackage{graphics} 
\usepackage{epsfig} 
\usepackage{times} 
\usepackage{amsmath} 
\usepackage{amssymb}  
\usepackage{xcolor}
\usepackage{caption}
\usepackage{subcaption}
\usepackage{algorithmic}
\usepackage{algorithm}
\usepackage[english]{babel}
\usepackage{semantic}
\usepackage{soul}
\usepackage{mathtools}
\usepackage[hidelinks]{hyperref}

\newtheorem{definition}{Definition}
\newtheorem{theorem}{Theorem}
\newtheorem{lemma}{Lemma}
\newtheorem{remark}{Remark}

\title{\LARGE \bf
Distributed State Estimation for Linear Time-Varying Systems with Sensor Network Delays
}

\author{Sanjay Chandrasekaran*, Vishnu Varadan*, Siva Vignesh Krishnan, Florian Dörfler, Mohammad H. Mamduhi
\thanks{S. Chandrasekaran, V. Varadan, F. Dörfler and M. H. Mamduhi are with the Automatic Control Laboratory, ETH Z\"urich. S. V. Krishnan is with the Department of Mechanical \& Process Engineering, ETH Z\"urich.}\thanks{
* These authors contributed equally to this work.}\thanks{
E-mail of the corresponding author: schandraseka@ethz.ch
}}

\begin{document}
\maketitle
\thispagestyle{empty}
\begin{abstract}
Distributed sensor networks often include a multitude of sensors, each measuring parts of a process state space or observing the operations of a system. Communication of measurements between the sensor nodes and estimator(s) cannot realistically be considered delay-free due to communication errors and transmission latency in the channels. 
    We propose a novel stability-based method that mitigates the influence of sensor network delays in distributed state estimation for linear time-varying systems. 
    Our proposed algorithm efficiently selects a subset of sensors from the entire sensor nodes in the network based on the desired stability margins of the distributed Kalman filter estimates, after which, the state estimates are computed only using the measurements of the selected sensors.
    We provide comparisons between the estimation performance of our proposed algorithm and a greedy algorithm that exhaustively selects an optimal subset of nodes. 
     We then apply our method to a simulative scenario for estimating the states of a linear time-varying system using a sensor network including 2000 sensor nodes. 
    Simulation results demonstrate the performance efficiency of our algorithm and show that it closely follows the achieved performance by the optimal greedy search algorithm.
\end{abstract}

\section{Introduction}

Distributed estimation and filtering~\cite{Olfati1} has been extensively studied and widely used to produce accurate estimates of states of dynamical systems in a decentralized fashion. Numerous works have shown that the distributed estimation schemes that compute state estimates based on the measurements of multiple sensor nodes (a sensor network) are considerably more reliable, scalable, and robust compared to the centralized approaches~\cite{speranzon, yang}. All these features result in distributed estimation to be scalable concerning the size of the system being sensed and the size of the sensor network~\cite{talebi}, hence being an applicable method to sense and monitor large-scale and networked systems using dense sensor networks.

\begin{figure}
    \centering
    \includegraphics[scale=0.48]{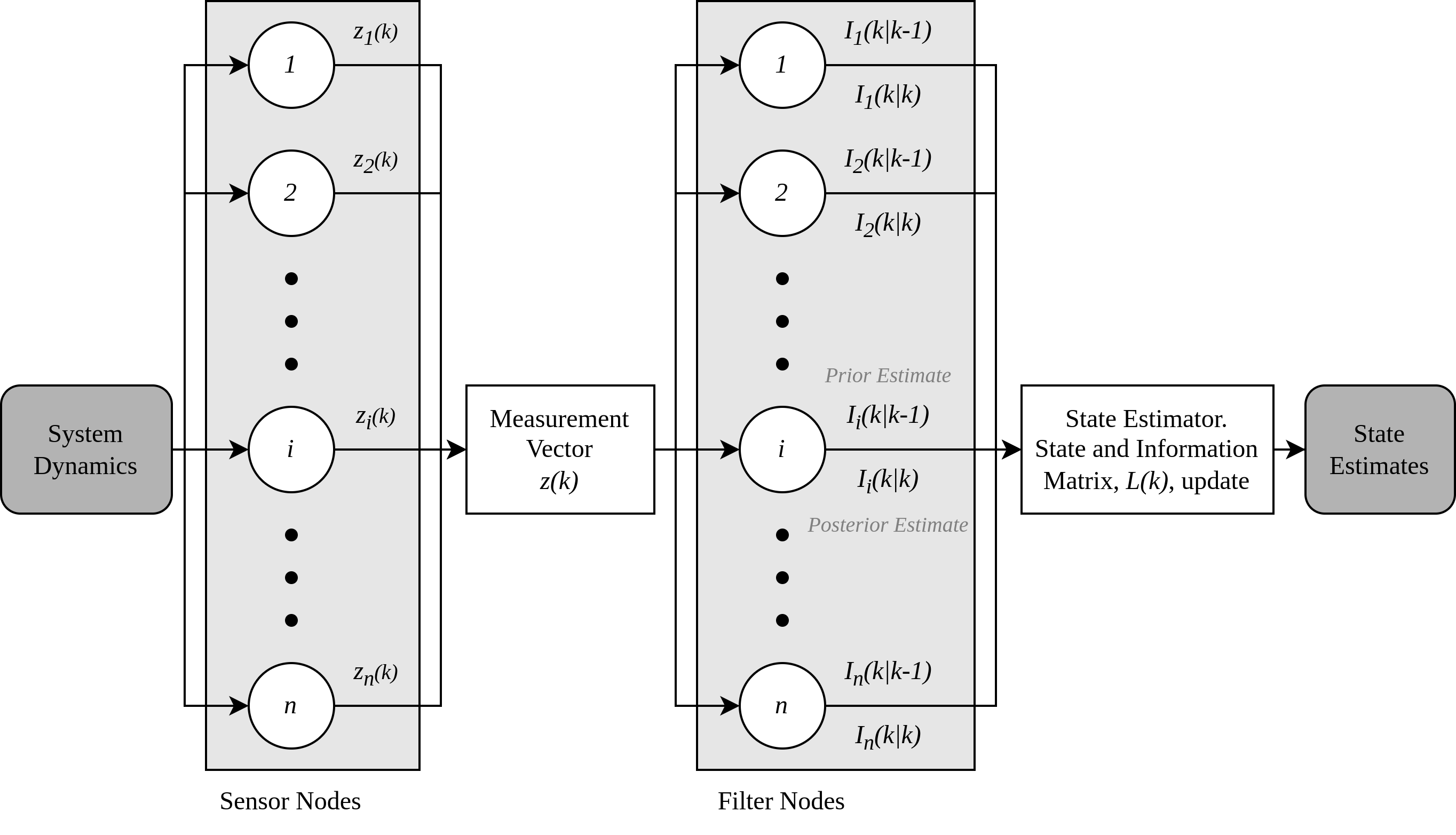}
     \caption{Representative DKF block diagram redrawn from \cite{MesbahiEgerstedt_2010}.
     }
    \label{fig:fig1}
    \vspace{-6 mm}
\end{figure}

Estimation techniques using sensor networks are typically classified into centralised and decentralised categories. In the former, sensor nodes directly communicate with a centralized ``fusion'' centre, which computes state estimates. While centralized methods may require fewer computations, it is not very dependable as a single node failure could result in loss of observability~\cite{speranzon}. In addition, centralized methods suffer from scalability issues \cite{hao}. To tackle these shortcomings, decentralized and distributed estimation and filtering algorithms have been proposed~\cite{Olfati1,talebi,MesbahiEgerstedt_2010}. In Distributed Kalman Filtering (DKF), the sensor nodes mutually exchange information with their neighbours and produce state estimates as well as estimation errors. Such information fusion technique makes the estimation process more reliable in the sense that partial node failure leads only to local estimation inaccuracies. DKF architectures make the estimation process more dependable and robust by eliminating a single fusion centre. However, these approaches are affected by the burden of communication overload and propagation of communication delays, especially in dense sensor networks~\cite{hao}.

 Standard DKF algorithms, such as in~\cite{MesbahiEgerstedt_2010} consider an ideal scenario, wherein delay-free measurements are collected from a network of multiple sensor nodes, after which an estimator aggregates all state measurements and covariance matrix estimates to compute the state estimates. A consensus-based fusion of sensor data was proposed in \cite{Olfati1}, where distributed micro-filters were associated with each sensor node to run local estimation processes. The consensus-based filtering was introduced to facilitate efficient fusion of measurements.
Another consensus-based fusion technique was proposed in~\cite{BATTISTELLI}, implementing event-triggered communication between sensor nodes. In this work, the Kalman Filter (KF) algorithm is run on each sensor node, and to improve the estimate of a node, event-triggered strategy is used to trigger information exchange between the nodes and their neighbours.
{However, the presence of packet losses, noise in the transmission channels, and bandwidth limitations~\cite{liang,mamduhi1} lead to the existence of delays from individual sensor nodes to the estimator ~\cite{dimitris}.} Delays may result in estimation inaccuracies, lowering control performance, and even destabilization of the system. Hence, it is of importance to understand the effects of the limitations associated with communication between the sensors and the estimator.
 Intermittent arrival of sensor measurements has been studied in~\cite{mamduhi2} where modifications in the prediction part of the KF algorithm are made to mitigate the absence of data packets. Along similar lines, in \cite{liang}, a time-varying Bernoulli-distributed random variable was defined, which would return a value of one if the dynamics at a particular time was affected by the past state and zero if not.
 In \cite{hao1}, the influence of transmission delays and packet losses is mitigated by minimizing the error covariance of the estimated and true state values. This method also accounted for fading measurements.
 However, the technique is developed for Linear Time-Invariant (LTI) systems.

In this work, 
we propose a method to find a subset of nodes from the total set of sensor nodes and use only the measurements of that subset for estimating the states of the system. Studies on subset selection exist in the literature, e.g., in \cite{paola}, wherein a node selection scheme is proposed based on maximizing a ``perception confidence value'' (inverse of the trace of the posterior covariance matrix). 
In this paper, we initially apply a greedy sensor selection algorithm by choosing appropriate thresholds on sensor characteristics such as variance and delays, and use this method for performance comparisons as it is an exhaustive search method that needs to have access to true state values, hence not practical.  We then propose a practical approach for selecting a subset of sensors based on stability criterion for distributed state estimation for LTV systems. We show that the proposed distributed algorithm yields comparable performance with that of the greedy one. The major contributions of this paper are as follows:
\begin{enumerate}
    \item Investigation of distributed state estimation for LTV systems in the presence of delays and a central selection of nodes which would then be used in the DKF. 
    \item Theoretical stability analysis of DKF for LTV systems, by extending the  {stability} results presented in \cite{BATTISTELLI2014707}  {for LTI systems}. 
    \item   {Proposal of} a scalable subset selection for sensor nodes based on  {the above} stability criterion and performance comparison with a greedy algorithm as a benchmark.
\end{enumerate}
The rest of this paper is structured as follows: Section \ref{cha:materials} introduces the problem and provides a brief  overview of DKF over sensor networks. Section \ref{cha:materials1} discusses the influence of delays on the DKF estimates, and the overall system's structural observability. It also provides stability analysis for DKF estimates for LTV systems. Section \ref{cha:materials2} presents the sensor selection algorithms; the conventional greedy method and the proposed method based on the stability criterion. Section \ref{cha:results} presents the simulation results on a second-order LTV system and performance comparisons.

\noindent{\textbf{Notations:}}
 In this paper, $\mathcal{N}(J,K)$ refers to a multi-variate Gaussian distribution with mean vector $J$ and covariance matrix $K$. The notations $\hat{z}(k|k-1)$ and $\hat{z}(k|k)$ represent the \textit{priori} and \textit{posteriori} estimates of $z$, respectively, while $\hat{z}$ indicates an estimate of variable $z$. The expectation operator is represented by $\mathbb{E}\{.\}$. The operator $tr(.)$ represents the trace of a square matrix. The notation $[n]$ denotes the set $\{1,2,\dots,n\}$. The notations $A \succeq B$ ($A \preceq B$) and $A \succ B$ ($A \prec B$) indicate that the matrix $A-B$ is positive semi-definite (negative semi-definite) and positive definite (negative definite), respectively  {and $\mathbb{Z}_{\geq 0}$ refers to the set of all non-negative integers}.

\section{Problem Description and Preliminaries}\label{cha:materials}
\subsection{Problem Setup}
We consider the state estimation problem for an LTV system being observed with a  {sensor network with $n$ sensors to measure the states of the system. Fig. \ref{fig:fig1} depicts the block diagram of the DKF. Distributed estimation is carried out at the "filter nodes", wherein the individual prior and posterior estimates are computed, after which they are aggregated at the "state estimator" to} produce state estimates of the LTV system. While sending information to the state estimator, each sensor is associated with a  {stochastic} communication delay. Delays typically compromise the accuracy of state estimates, hence, the goal is to select the optimal number of sensors that produce accurate-enough estimates in the presence of delays. 
We consider a stochastic LTV system represented by the following discrete-time model
\begin{equation} \label{eq1}
    x(k+1) = A(k)x(k) + w(k), 
\end{equation}
where $x(k) \!\in\! \mathbb{R}^m$ and $w(k) \!\sim\! \mathcal{N}(0_{m},Q)$ denote the  {states and the} system's exogenous disturbances,  {respectively}.  {It is assumed} that the additive disturbances are independent $\forall \  k\in \mathbb{Z}_{\geq 0}$. 
 The states are assumed to be partially observable by individual sensors, with  {the partial measurements} given by
\begin{equation*} \label{eq2}
    z_i(k) = H_ix(k) + v_i(k),
\end{equation*}
where $z_i(k)\in \mathbb{R}^{p}$ is the measurement vector taken from the $i$-th sensor node at time-step $k$, $H_i \in \mathbb{R}^{p \times m}$, and $v_i \sim \mathcal{N}(0_{p},R_i)$ is the measurement noise associated with the sensor $i$. The aggregate measurement, i.e., including measurements from all sensors, would then be expressed as
\begin{align*} 
z(k) &= Hx(k) + v(k),\\
z(k) = [z_1(k) \;z_2(k) & \;\ldots\;z_n(k)]^\top,  H = [H_1 \;H_2 \;\ldots \;H_n]^\top, \\v(k) &= [v_1(k) \;v_2(k)\; \ldots\; v_n(k)]^\top.
\end{align*}

\vspace{-5.5mm}
\subsection{Preliminaries}

 {The conventional KF state update equation is given by:}
\begin{equation*} \label{eq4}
    \hat{x}(k|k) = \hat{x}(k|k-1) + L(k)(z(k) - H\hat{x}(k|k-1)),
\end{equation*}
where $L(k)$ is the time-varying Observer Gain (OG). 
We define the posteriori estimation error as $\epsilon(k|k) = \hat{x}(k|k) - x(k)$, and its associated posteriori covariance matrix by $\Sigma(k|k)=\mathbb{E}\{\epsilon(k|k) \epsilon(k|k)^\top\}$.  {By minimizing the} trace of the covariance matrix, the recursive equations for the OG and the covariance matrix are obtained as follows
\begin{align} \label{eq5}
    L(k) &= \Sigma(k|k)H^{\top}R^{-1},\\ \label{eq6}
    \Sigma(k|k) &= (I - L(k)H)\Sigma(k|k-1),
\end{align}
where $R = [R_1,R_2,\ldots,R_n]^\top$. 

For the  {purpose of} brevity, we represent the KF in a compact form \cite{MesbahiEgerstedt_2010} using the Information Filter (IF) framework, wherein we define $\mathcal{I}(k|k) = \Sigma(k|k)^{-1}$, and $\mathcal{I}(k|k-1) = \Sigma(k|k-1)^{-1}$ as the corresponding Information Matrices (IM) for the priori and posteriori state estimates, respectively. Using the IM, the expression \eqref{eq6} together with the OG in \eqref{eq5} can be written as

\begin{equation} \label{eq9}
    \mathcal{I}(k|k) = \mathcal{I}(k|k-1) + Y(k),
\end{equation}
where, $Y(k) = H^\top R^{-1}H$. We further introduce variables $y,\hat{y}$ as Information Vectors (IV), such that $y(k) \!=\! H^\top R^{-1}z(k)$, and $\hat{y}(k|k) = \mathcal{I}(k|k)\hat{x}(k|k)$. We then have
\begin{equation} \label{eq10}
    \hat{y}(k|k) = \hat{y}(k|k-1) + y(k).
\end{equation}
The recursive expressions for the OG and IM, in the IF framework, can therefore be  derived as

\begin{align} \label{eq11}
    L(k) &= A(k)\;\mathcal{I}(k|k)H^{\top}R^{-1},\\\label{eq12}
    \mathcal{I}(k\!+\!1|k) &=\! (I \!-\! C(k))^{\!\top}\!M(k)(I \!-\! C(k)) \!+ C(k)Q^{-1}C(k)^{\!\top}\!,\\\label{eq13}
    \hat{y}(k\!+\!1|k) &= (I - C(k))(A(k)^{-1})^{\top}\hat{y}(k|k),
\end{align}
where, $M(k) = (A(k)^{-1})^{\top}\mathcal{I}(k|k)A(k)^{-1}$, $C(k) = M(k)(M(k) + Q^{-1})^{-1}$.  {In case $A(k)$ gets close to singularity, its inverse can be replaced by the pseudo-inverse \cite{MesbahiEgerstedt_2010}.}

\subsection{Distributed Kalman Filter}

 At the  {state} estimator, the difference $\mathcal{I}_i(k|k) - \mathcal{I}_i(k|k-1)$ is computed and the IM is then updated. Similarly, the corresponding IV is updated by computing $\hat{y}_i(k|k) - \hat{y}_i(k|k-1)$  {$\forall i \in [n]$}. Each filter node provides its corresponding prior and posterior IM, i.e., $\mathcal{I}_i(k|k)$ and $\mathcal{I}_i(k|k-1)$, and state estimates, i.e., $\hat{x}_i(k|k)$ and $\hat{x}_i(k|k-1)$. This information is forwarded to the system estimator that produces the following outputs for the overall set of state estimates:
\begin{align} \label{eq16}
    \hat{x}(k|k) &= \Sigma(k|k)[\mathcal{I}(k|k-1)\hat{x}(k|k-1)
    \\\nonumber &+\sum_{j=1}^{n}{[\mathcal{I}_j(k|k)\hat{x}_j(k|k) - \mathcal{I}_j(k|k-1)\hat{x}_j(k|k-1)]},\\ \label{eq17}
    \mathcal{I}(k|k) &= \mathcal{I}(k|k-1) + \sum_{j=1}^{n}{[\mathcal{I}_j(k|k)-\mathcal{I}_j(k|k-1)]}.
\end{align}

\section{Structural Observability and Stability Analysis} \label{cha:materials1}
\subsection{DKF with Delays}

In this section, we present the modifications required for the DKF in the presence of stochastic delays. Before proceeding, we assume that the open-loop stochastic LTV system in \eqref{eq1} is asymptotically stable in the mean-square sense. This requires $Q$ to be finite and the state transition matrix, $\Phi(k,0) \triangleq A(k-1)A(k-2)\dots A(0) \rightarrow 0$, $k\rightarrow{\infty}$ .

If there exists a delay $\tau : N-\tau>0$, where $N$ is some time index; $\hat{x}_i(k|k):k \in \{N-1,\ldots, N-\tau+1\}$ is unavailable from a particular filter node $i$, leading to
\begin{align*}
        \hat{x}_i(N|N-1) &=  [\hat{x}_i(N-\tau+1|N-\tau) + z_i(N-\tau)]\\ &\times[\prod_{j=1}^{\tau-1}{\{A(N-j) - L_i(N-j)H_i(N-j)\}}]
\end{align*}

\subsection{Structural Observability of the Networked System}
In a networked dynamical system with a large number of nodes, the classical tests for observability that include calculating the Grammian matrix would become easily complicated when scale grows. This motivates us to look into the concept of ``structural observability'' to identify the number of nodes to guarantee the system observability. Initially developed in \cite{1100557} for structural controllability, the \textit{theory of duality} enables us to extend this concept to structural observability. Next, we briefly outline the concept of structural observability, mainly relying on the results in \cite{montanari}.

\begin{definition}
A matrix $\bar A \in$ \{0,$\star$\} is said to be a structural matrix if each element of $\bar A = [a_{ij}]$ is either zero or a non-zero free parameter denoted by a $\star$. A matrix $A$ is called a numerical realization of $\bar A$ if any real number is assigned to all the free parameters of $\bar A$.
\end{definition}

The structured pair of matrices ($\bar A, \bar H$) is said to be structurally observable if and only if a numerical realization $(A,H)$ is observable. Structurally observable pair of matrices is observable in a wide range of parameters except in a narrow range that renders the system unobservable. Clearly, structural observability is only a necessary condition for the system to be observable in Kalman's sense and does not prove sufficiency. In \cite{liu} it is proven that the system cannot be structurally controllable, if in the associated state-control input graph (this graph is formed by taking the states and control inputs as the vertices, wherein a directed edge between the two states $x_j$ and $x_i$ exists if $a_{ij}$ is nonzero, and a directed edge between the control input $u_j$ and state  $x_i$ exists if $b_{ij}$ is nonzero), one cannot reach all the states, starting from any of the control input nodes. By the theory of duality, we can extend the above result to define structural observability. The pair $(A,H)$ is said to be structurally observable if and only if, in the respective graph formed by $A^\top$ and $H^\top$, every state $x_i$ can be reached from   {the} outputs $y_j$   {and no dilation exists}. This result also extends to linear time varying systems, as shown in \cite{ltvobservability}.

\subsection{Stability of DKF}\label{sub:stability}
A comprehensive stability analysis of DKF for LTI systems is presented in~\cite{BATTISTELLI2014707}. In this section, we extend that result by conducting a rigorous analysis of the stability for LTV systems using DKF and prove that the  estimation error $x_i(k) - \hat{x}_i(k|k)$ remains bounded in the mean square sense. Before stating the main result, we state the following lemma.
\begin{lemma}
Assume that $A(k)$ is invertible for all $k$. Recalling the recursive form of the IM, one can write from~\eqref{eq12}
\begin{equation*}
\mathcal{I}_i(k+1|k) = \psi_k(\mathcal{I}_i(k|k)).    
\end{equation*}The following results then hold:
\begin{enumerate}
    \item The function $\psi_k(.)$ is monotonic and non-decreasing, i.e., given two positive semi-definite matrices $\mathcal{I}_1$ and $\mathcal{I}_2$, with $\mathcal{I}_1 \preceq \mathcal{I}_2$, we have $0 \preceq \psi_k(\mathcal{I}_1) \preceq \psi_k(\mathcal{I}_2)$.
    \item For any positive semi-definite matrix $\widetilde{\mathcal{I}}$, there exists a $\hat{\beta} \in \left(0,1\right]$ 
    such that $\psi_k(\mathcal{I}) \succeq \hat{\beta}(A(k)^{-1})^{\top}\mathcal{I}A(k)^{-1}$,  $\forall\; \mathcal{I} \preceq \widetilde{\mathcal{I}}$.
\end{enumerate}
\end{lemma}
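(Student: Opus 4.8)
The plan is to first rewrite the map $\psi_k$ in a form better suited to order arguments. Writing $M = (A(k)^{-1})^{\top}\mathcal{I}A(k)^{-1}$ and $C(k) = M(M+Q^{-1})^{-1}$, a short computation gives $I - C(k) = Q^{-1}(M+Q^{-1})^{-1}$, and substituting into \eqref{eq12} collapses the expression to the standard information-form prediction
\[
\psi_k(\mathcal{I}) = M - M(M+Q^{-1})^{-1}M = \big(M^{-1}+Q\big)^{-1},
\]
the last equality holding whenever $\mathcal{I}$ (hence $M$) is invertible, while the middle expression stays valid for singular $\mathcal{I}$ because $M+Q^{-1}\succ 0$ is always invertible (recall $Q\succ 0$, so $Q^{-1}\succ 0$). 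Equivalently, $\psi_k(\mathcal{I}) = (A(k)\mathcal{I}^{-1}A(k)^{\top} + Q)^{-1}$ for invertible $\mathcal{I}$, which I will exploit for the monotonicity claim.

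For statement (1), I would first treat positive definite $\mathcal{I}_1\preceq\mathcal{I}_2$ and read off monotonicity as a composition of elementary order maps applied to $\psi_k(\mathcal{I}) = (A(k)\mathcal{I}^{-1}A(k)^{\top}+Q)^{-1}$: matrix inversion is order-reversing on the positive definite cone, the congruence $X\mapsto A(k)XA(k)^{\top}$ is order-preserving, the translation $X \mapsto X+Q$ is order-preserving, and the final inversion is again order-reversing; two order-reversals compose to an order-preserving map, giving $\psi_k(\mathcal{I}_1)\preceq\psi_k(\mathcal{I}_2)$. The general positive semidefinite case follows by continuity: $\psi_k$ is continuous in $\mathcal{I}$ (the only inversion, of $M+Q^{-1}$, is of a uniformly positive definite matrix), so applying the definite result to $\mathcal{I}_j+\varepsilon I$ and letting $\varepsilon\downarrow 0$ preserves the inequality since the positive semidefinite cone is closed. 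Nonnegativity $\psi_k(\mathcal{I})\succeq 0$ is immediate from \eqref{eq12} itself, as both summands are congruences of the positive semidefinite matrices $M$ and $Q^{-1}$.

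For statement (2), note $(A(k)^{-1})^{\top}\mathcal{I}A(k)^{-1}=M$, so the claim reads $\psi_k(\mathcal{I})\succeq\hat\beta M$. Using the closed form, $\psi_k(\mathcal{I})-\hat\beta M = M^{1/2}\big[(1-\hat\beta)I - M^{1/2}(M+Q^{-1})^{-1}M^{1/2}\big]M^{1/2}$, so it suffices to produce $\hat\beta\in(0,1]$ with $M^{1/2}(M+Q^{-1})^{-1}M^{1/2}\preceq(1-\hat\beta)I$ uniformly over the order interval. Setting $\mu=\lambda_{\min}(Q^{-1})>0$, the bound $M+Q^{-1}\succeq M+\mu I$ gives $M^{1/2}(M+Q^{-1})^{-1}M^{1/2}\preceq M^{1/2}(M+\mu I)^{-1}M^{1/2}$, and this commuting expression has eigenvalues $\sigma/(\sigma+\mu)$ over the eigenvalues $\sigma$ of $M$. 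Since $\sigma/(\sigma+\mu)$ is increasing and $\lambda_{\max}(M)\leq\lambda_{\max}(\widetilde{\mathcal{I}}\text{-image})=:\tilde\sigma$ with $\tilde\sigma=\lambda_{\max}\big((A(k)^{-1})^{\top}\widetilde{\mathcal{I}}A(k)^{-1}\big)$ for every $\mathcal{I}\preceq\widetilde{\mathcal{I}}$, I obtain the uniform bound $M^{1/2}(M+Q^{-1})^{-1}M^{1/2}\preceq\frac{\tilde\sigma}{\tilde\sigma+\mu}I$, so $\hat\beta=\frac{\mu}{\tilde\sigma+\mu}\in(0,1]$ works. Factoring through $M^{1/2}$ rather than conjugating by $M^{-1/2}$ is what keeps the argument valid when $\mathcal{I}$, and hence $M$, is singular.

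I expect the main obstacle to be organizational rather than deep: establishing the clean closed form for $\psi_k$ and certifying that it, together with all the order manipulations, remains legitimate on the boundary of the cone (singular $\mathcal{I}$), while also making $\hat\beta$ uniform over the entire interval $\{\mathcal{I}\preceq\widetilde{\mathcal{I}}\}$ rather than merely pointwise. Both difficulties are dissolved by the $M^{1/2}$-factoring in statement (2) and the $\varepsilon$-perturbation and continuity device in statement (1).
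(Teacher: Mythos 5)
Your proof is correct. Part (1) takes essentially the paper's own route: both you and the paper collapse $\psi_k$ to the information-form prediction (equivalently $(A(k)\mathcal{I}^{-1}A(k)^{\top}+Q)^{-1}$ on invertible arguments), obtain monotonicity from the order-reversal of matrix inversion, and extend to semidefinite arguments by perturbing with $\varepsilon I$ (the paper's $\alpha I$) and passing to the limit. One shared caveat: the collapse requires reading~\eqref{eq12} in Joseph form, $(I-C(k))M(k)(I-C(k))^{\top}+C(k)Q^{-1}C(k)^{\top}$; with the transposes placed as literally printed in~\eqref{eq12}, the identity $\psi_k(\mathcal{I})=M-M(M+Q^{-1})^{-1}M$ (with your $M=(A(k)^{-1})^{\top}\mathcal{I}A(k)^{-1}$) fails for non-commuting matrices. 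The paper's proof makes the same silent correction, so this does not separate your argument from theirs.

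Where you genuinely diverge is part (2). The paper regularizes again, setting $\mathcal{I}(\alpha)=\mathcal{I}+\alpha I$, rewrites $\psi_k(\mathcal{I}(\alpha))=(A(k)^{-1})^{\top}(\mathcal{I}(\alpha)^{-1}+A(k)^{-1}Q(A(k)^{-1})^{\top})^{-1}A(k)^{-1}$, and invokes the \emph{existence} of a scalar $\hat\gamma(k)>0$ with $A(k)^{-1}Q(A(k)^{-1})^{\top}\preceq\hat\gamma(k)\mathcal{I}(\alpha)^{-1}$, concluding with $\hat\beta=\min_k(1+\hat\gamma(k))^{-1}$ and $\alpha\to 0^{+}$. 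Your $M^{1/2}$-factorization instead works directly on the boundary of the cone --- no limiting argument at all --- and the eigenvalue computation on $M(M+\mu I)^{-1}$ yields an \emph{explicit} constant $\hat\beta=\mu/(\tilde\sigma+\mu)$, with $\mu=\lambda_{\min}(Q^{-1})$ and $\tilde\sigma=\lambda_{\max}((A(k)^{-1})^{\top}\widetilde{\mathcal{I}}A(k)^{-1})$. That explicitness is a real gain: it quantifies how the contraction margin degrades as $\widetilde{\mathcal{I}}$ or $\|A(k)^{-1}\|$ grows, information the paper hides inside the unquantified $\hat\gamma(k)$. The only thing to add for full alignment with the paper is $k$-uniformity: your $\tilde\sigma$, hence your $\hat\beta$, depends on $k$, whereas Theorem~2 of the paper uses powers $\hat\beta^{\tau}$ of a single constant; so finish by taking $\hat\beta=\inf_k \mu/(\tilde\sigma_k+\mu)$, which is positive under the same implicit uniform boundedness of $(A(k)^{-1})^{\top}\widetilde{\mathcal{I}}A(k)^{-1}$ over $k$ that the paper needs for its $\min_k(1+\hat\gamma(k))^{-1}$ to be positive.
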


\begin{proof}We can rewrite expression~\eqref{eq12} as,
\begin{align*}
    \psi_k(\mathcal{I}) &= (A(k)^{-1})^{\top}\mathcal{I}A(k)^{-1} \\&- (A(k)^{-1})^{\top}\mathcal{I}{(\mathcal{I} + A(k)^{\top}Q^{-1}A(k))}^{-1}\mathcal{I}A(k)^{-1}.
\end{align*}
For a non-singular $\mathcal{I}$, above expression can be simplified to
\begin{equation} \label{eq:psik}
    \psi_k(\mathcal{I}) = {(A(k)\mathcal{I}^{-1}A(k)^{\top} + Q)}^{-1}.
\end{equation}
Consider two distinct matrices $\mathcal{I}_1$ and $\mathcal{I}_2$ where $\mathcal{I}_1 \preceq \mathcal{I}_2$. Let $\mathcal{I}_1(\alpha) = \mathcal{I}_1 + \alpha I$ and $\mathcal{I}_2(\alpha) = \mathcal{I}_2 + \alpha I$. From the above simplified form of $\psi_k(.)$, one can conclude
\begin{align*}
    &\psi_k(\mathcal{I}_1(\alpha))^{-1} - \psi_k(\mathcal{I}_2(\alpha))^{-1} =\\ &A(k)(\mathcal{I}_1(\alpha)^{-1} - \mathcal{I}_2(\alpha)^{-1})A(k)^{\top} \succeq 0.
\end{align*}
Thus, we obtain $\psi_k(\mathcal{I}_{2}(\alpha)) - \psi_k(\mathcal{I}_{1}(\alpha)) \succeq 0$, and the first part of the Lemma then follows by letting $\alpha \to 0^{+}$.

For the second part, let $\mathcal{I}(\alpha) = \mathcal{I} + \alpha I$ with $\alpha > 0$. By using the equality (\ref{eq:psik}), we can write
\begin{align*}
   & \psi_k(\mathcal{I}(\alpha)) =\\& (A(k)^{-1})^{\top}(\mathcal{I}(\alpha)^{-1} +  A(k)^{-1}Q(A(k)^{-1})^{\top})^{-1}A(k)^{-1}.
\end{align*}
For any $\hat{\alpha} > \alpha$, and ${\hat{\mathcal{I}} \succeq \mathcal{I}}$ we have 
\begin{equation*}
    \mathcal{I}(\alpha)^{-1} \succeq (\hat{\mathcal{I}} + \hat{\alpha}I)^{-1},
\end{equation*}
which implies that there exists a positive scalar $\hat{\gamma}(k)$ at each time instant $k$ such that, $\forall k$ we have
\begin{equation*}
    A(k)^{-1}Q(A(k)^{-1})^{\top} \preceq \hat{\gamma}(k)\mathcal{I}(\alpha)^{-1}.
\end{equation*}
Combining the last two expressions, we conclude $\forall k$ that
\begin{equation}\label{betahat}
    \psi_k(\mathcal{I}(\alpha)) \succeq \frac{1}{1+\hat{\gamma}(k)}(A(k)^{-1})^{\top}\mathcal{I}(\alpha)A(k)^{-1}.
\end{equation}
The proof of the second part then follows by setting $\hat{\beta} = \min\limits_{k}\frac{1}{1+\hat{\gamma}(k)}$ and $\alpha \to 0^{+}$.
\end{proof}

\begin{theorem}\label{the:theorem1}
Assume that $A(k)$ is invertible  for all $k$ and define 
$\mathcal{I}_i(0| -1) = 0, \forall i$, i.e., $\hat{x}_i(0| -1) = 0, \forall i$. Then,  for each $k = \{0, 1, \ldots\}$, and $\forall i \in [n]$, we have
\begin{align*}
    \mathcal{I}_i(k|k) \preceq  &\; {\mathbb{E}\{(x(k)-\hat{x}_i(k|k)){(x(k)-\hat{x}_i(k|k))}^\top\}}^{-1}.
\end{align*}
\noindent In other words, the IM is upper bounded by the inverse of the error covariance of the DKF estimates.
\end{theorem}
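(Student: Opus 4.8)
The plan is to recast the claim in covariance form and prove it by induction on $k$, alternating between the time-prediction and measurement-update half-steps of the local filter. Writing $\Sigma_i(k|k)=\mathcal{I}_i(k|k)^{-1}$ for the filter-computed covariance and $P_i(k|k):=\mathbb{E}\{(x(k)-\hat{x}_i(k|k))(x(k)-\hat{x}_i(k|k))^\top\}$ for the true error covariance, the asserted inequality $\mathcal{I}_i(k|k)\preceq P_i(k|k)^{-1}$ is equivalent to $\Sigma_i(k|k)\succeq P_i(k|k)$, i.e.\ the nominal covariance is consistent (never overconfident). The base case is immediate at the prior level: since $\mathcal{I}_i(0|-1)=0$ the nominal prior carries no information, whereas the true prior error $\epsilon_i(0|-1)=-x(0)$ has finite positive-definite covariance $P_i(0|-1)$, so $\mathcal{I}_i(0|-1)=0\preceq P_i(0|-1)^{-1}$ holds trivially. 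The induction hypothesis is then $\mathcal{I}_i(k|k-1)\preceq P_i(k|k-1)^{-1}$.

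For the prediction half-step I would first derive the true error recursion. From $x(k+1)=A(k)x(k)+w(k)$ and the prediction $\hat{x}_i(k+1|k)=A(k)\hat{x}_i(k|k)$ one gets $\epsilon_i(k+1|k)=A(k)\epsilon_i(k|k)-w(k)$, and because $w(k)$ is zero-mean and independent of the error, $P_i(k+1|k)=A(k)P_i(k|k)A(k)^\top+Q$. Inverting and comparing with the simplified form in~\eqref{eq:psik} yields $P_i(k+1|k)^{-1}=\psi_k(P_i(k|k)^{-1})$, while the nominal prior obeys $\mathcal{I}_i(k+1|k)=\psi_k(\mathcal{I}_i(k|k))$ by~\eqref{eq12}. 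Monotonicity of $\psi_k$ (Part 1 of the Lemma) then propagates $\mathcal{I}_i(k|k)\preceq P_i(k|k)^{-1}$ into $\mathcal{I}_i(k+1|k)\preceq P_i(k+1|k)^{-1}$. This is exactly where the delay analysis is absorbed: a node subject to a delay of $\tau$ steps undergoes $\tau$ consecutive prediction-only half-steps with no measurement update, and repeated application of the same monotonicity shows the bound survives the delay intact, so delays require no separate argument.

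For the measurement-update half-step, applying the (nominal) gain $L_i$ to the true measurement gives $\epsilon_i(k|k)=(I-L_iH_i)\epsilon_i(k|k-1)+L_iv_i$, hence $P_i(k|k)=(I-L_iH_i)P_i(k|k-1)(I-L_iH_i)^\top+L_iR_iL_i^\top$. The key point is that the nominal posterior covariance admits the same Joseph-form representation with the same $L_i$ and the same $R_i$, since~\eqref{eq6} with the optimal gain equals $(I-L_iH_i)\Sigma_i(k|k-1)(I-L_iH_i)^\top+L_iR_iL_i^\top$. As the map $X\mapsto(I-L_iH_i)X(I-L_iH_i)^\top+L_iR_iL_i^\top$ is monotone in $X$, the inequality $\Sigma_i(k|k-1)\succeq P_i(k|k-1)$ transfers to $\Sigma_i(k|k)\succeq P_i(k|k)$, i.e.\ $\mathcal{I}_i(k|k)\preceq P_i(k|k)^{-1}$, and reapplying the prediction step closes the induction to the next prior.

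I expect the update half-step, not the prediction, to be the main obstacle. The subtlety is that $L_i$ is computed from the nominal $\Sigma_i$ and is therefore \emph{suboptimal} for the true statistics $P_i$; one must argue that the true posterior covariance still obeys the Joseph recursion under this mismatched gain (which holds for any gain, not only the optimal one) and that this recursion is monotone, so that no optimality or equality assumption is smuggled in. A secondary care point is the singular base case: because $\mathcal{I}_i(0|-1)=0$ and $H_i^\top R_i^{-1}H_i$ may be rank-deficient, I must invoke the positive-semidefinite version of monotonicity rather than the nonsingular one, which is precisely why the Lemma was phrased through the limit $\alpha\to 0^{+}$.
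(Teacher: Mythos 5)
Your proposal is correct, and it follows the same inductive skeleton as the paper---alternating measurement-update and prediction half-steps, with part 1 of Lemma~1 pushing the bound through prediction via $P_i(k+1|k)^{-1}=\psi_k(P_i(k|k)^{-1})$ and $\mathcal{I}_i(k+1|k)=\psi_k(\mathcal{I}_i(k|k))$ from \eqref{eq12}, writing $P_i$ for the true error covariance as you do---but your handling of the measurement update is genuinely different from the paper's, and it is the stronger of the two. The paper's proof asserts the information-additivity \emph{equality} $\mathbb{E}\{\epsilon_i(k|k)\epsilon_i(k|k)^{\top}\}^{-1}=\mathbb{E}\{\epsilon_i(k|k-1)\epsilon_i(k|k-1)^{\top}\}^{-1}+l_i$ for the \emph{true} error covariances. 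That equality holds only if the gain applied is optimal for the true prior covariance; since the DKF gain $L_i$ is computed from the nominal $\Sigma_i(k|k-1)$, which generally differs from the true covariance (the theorem's inequality can be strict), a mismatched gain yields only $P_i(k|k)^{-1}\preceq P_i(k|k-1)^{-1}+l_i$---an inequality pointing the wrong way to chain with the inductive hypothesis $P_i(k|k-1)^{-1}\succeq\mathcal{I}_i(k|k-1)$. Your Joseph-form argument repairs exactly this step: $P_i(k|k)=(I-L_iH_i)P_i(k|k-1)(I-L_iH_i)^{\top}+L_iR_iL_i^{\top}$ holds for \emph{any} gain, the nominal covariance obeys the same recursion because the gain in \eqref{eq5} is optimal for $\Sigma_i(k|k-1)$ (so \eqref{eq6} coincides with its Joseph form), and monotonicity of $X\mapsto(I-L_iH_i)X(I-L_iH_i)^{\top}+L_iR_iL_i^{\top}$ transfers the bound. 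What the paper's information-form phrasing buys in exchange is a painless base case: since $\mathcal{I}_i(0|-1)=0$ is singular, your covariance-form recasting $\Sigma_i\succeq P_i$ is undefined at initialization, so you must either regularize (the $\alpha\to 0^{+}$ device you anticipate) or execute the first update in information form. In short: same induction and same prediction step, but your update step replaces an unjustified equality with an argument valid under gain mismatch, which is what a fully rigorous proof of this theorem requires; your observation that delays are absorbed as repeated prediction-only half-steps is a useful addition the paper does not make explicit.
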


\begin{proof} The proof is based on mathematical induction. Assume at a time $k$, the following inequality holds $\forall i \in [n]$:
\begin{align*}
    \mathcal{I}_i&(k|k-1) \\&\preceq {\mathbb{E}\{(x(k)-\hat{x}_i(k|k-1)){(x(k)-\hat{x}_i(k|k-1))}^{\top}\}}^{-1}.
\end{align*}
Define $l_i = H_i^{\top}R_i^{-1}H_i$. This implies that 
\begin{align*}
    &{\mathbb{E}\{(x(k)-\hat{x}_i(k|k)){(x(k)-\hat{x}_i(k|k))}^{\top}\}}^{-1} \\
    &= {\mathbb{E}\{(x(k)-\hat{x}_i(k|k-1)){(x(k)-\hat{x}_i(k|k-1))}^{\top}\}}^{-1} + l_i 
    \\ &\succeq \mathcal{I}_i(k|k-1) + l_i\\& = \mathcal{I}_i(k|k)
\end{align*}
From the fist part of Lemma 1, one can readily conclude
\begin{align*}
    &\mathcal{I}_i(k+1|k) = \psi_k(\mathcal{I}_i(k|k))\preceq\\
    & \psi_k({\mathbb{E}\{(x(k)-\hat{x}_i(k|k)){(x(k)-\hat{x}_i(k|k))}^{\top}\}}^{-1})=\\ 
    & {\mathbb{E}\{(x(k+1)-\hat{x}_i(k+1|k)){(x(k+1)-\hat{x}_i(k+1|k))}^{\!\top}\}}^{-1}.
\end{align*}
The proof can be concluded since at $k=0$, the initialization $\mathcal{I}_i(0| -1) = 0, \forall i$, i.e., $\hat{x}_i(0| -1) = 0, \forall i$, ensures that the inequality holds for any $k>0$.
\end{proof}

Theorem \ref{the:theorem1} proves that the true error covariance resulting from the DKF Algorithm is upper bounded by the inverse of the corresponding IM (which is the inverse of the predicted error covariance). To prove the boundedness of the true error covariance, it would therefore be sufficient to show that individual IMs are lower bounded by a positive definite matrix. Having this, we now state the main stability result.
\begin{theorem}\label{the:theorem2}
Assume that $A(k)$ is invertible  for all $k$ and that the structured pair of matrices ($\bar A, \bar H$), where $\bar A$ is the structure of the system matrix with $A(k)$ is structurally observable. Then, there exists a time instant $\bar{k}$ and a positive semi-definite matrix $\widetilde{\mathcal{I}}$ such that 
\begin{equation*}
    0 \prec \widetilde{\mathcal{I}}(k) \preceq \mathcal{I}_i(k|k), \; \forall i \in [n], \;\forall k \geq \bar{k}.
\end{equation*}
Consequently, we obtain
\begin{align}
    \lim_{k \to \infty} \sup {\mathbb{E}\{(x(k)-\hat{x}_i(k|k)){(x(k)-\hat{x}_i(k|k))}^{\top}\}}&  \nonumber\\ \leq tr({\widetilde{\mathcal{I}}}(k)^{-1})&.
\end{align}
\end{theorem}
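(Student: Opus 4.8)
The plan is to recognize that the second (quantitative) claim is an immediate corollary of the first, so that essentially all the work lies in producing a uniform positive-definite lower bound on the local information matrices. Indeed, once $0 \prec \widetilde{\mathcal{I}}(k) \preceq \mathcal{I}_i(k|k)$ is established, Theorem~\ref{the:theorem1} gives $\mathbb{E}\{(x(k)-\hat{x}_i(k|k))(x(k)-\hat{x}_i(k|k))^\top\} \preceq \mathcal{I}_i(k|k)^{-1} \preceq \widetilde{\mathcal{I}}(k)^{-1}$, and taking traces together with the $\lim\sup$ yields the stated bound. I would therefore state this reduction first and devote the remainder of the proof to lower bounding $\mathcal{I}_i(k|k)$.

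Before unrolling, I would secure a uniform upper envelope so that the second part of Lemma~1 becomes applicable. Using the simplified form $\psi_k(\mathcal{I}) = (A(k)\mathcal{I}^{-1}A(k)^\top + Q)^{-1}$ from~\eqref{eq:psik}, the prior information matrix satisfies $\mathcal{I}_i(k+1|k)^{-1} = A(k)\mathcal{I}_i(k|k)^{-1}A(k)^\top + Q \succeq Q$, hence $\mathcal{I}_i(k+1|k) \preceq Q^{-1}$ for all $k$; adding the measurement term then gives $\mathcal{I}_i(k|k) \preceq Q^{-1} + l_i =: \widetilde{\mathcal{I}}$, with $l_i = H_i^\top R_i^{-1}H_i$. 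This fixes an admissible $\widetilde{\mathcal{I}}$ and, through the second part of Lemma~1, a corresponding contraction factor $\hat{\beta} \in (0,1]$ valid along the whole trajectory.

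Next I would iterate the recursion $\mathcal{I}_i(k|k) = \psi_{k-1}(\mathcal{I}_i(k-1|k-1)) + l_i$. Applying the second part of Lemma~1 to the first summand, and then using monotonicity (the first part) together with the linearity of the congruence map $X \mapsto (A^{-1})^\top X A^{-1}$, an $N$-step unrolling yields $\mathcal{I}_i(k|k) \succeq \sum_{s=0}^{N-1}\hat{\beta}^{\,s}\,(\Phi(k,k-s)^{-1})^\top\,l_i\,\Phi(k,k-s)^{-1}$, where the leftover term involving $\mathcal{I}_i(k-N|k-N)$ is discarded since it is positive semi-definite. The right-hand side is precisely a discounted, finite-horizon observability Gramian of the LTV pair built from $A(\cdot)$ and $H_i$ (weighted by $R_i^{-1}$), and this is the quantity I must show is bounded below by a positive definite matrix.

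The crux is converting the structural observability hypothesis on $(\bar A, \bar H)$ into positive definiteness of this Gramian over a sufficiently long window. I would argue that, since structural observability guarantees observability of the numerical realization outside a measure-zero set of parameters, there exist a horizon $N$ and a time $\bar{k}$ beyond which the windowed Gramian is positive definite uniformly over the finitely many nodes, so that setting $\widetilde{\mathcal{I}}(k)$ equal to its smallest eigenvalue times the identity gives $0 \prec \widetilde{\mathcal{I}}(k) \preceq \mathcal{I}_i(k|k)$ for all $k \geq \bar{k}$. I expect this to be the main obstacle, for two reasons: structural observability is only a necessary (generic) condition and must be upgraded to a genuine Gramian bound for the specific LTV realization, and, more delicately, the local Gramian is built from a single $H_i$, so the argument must either assume per-node observability or exploit the information each node receives through the DKF aggregation~\eqref{eq17}; I would make this information-sharing explicit before claiming the per-node bound. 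The remaining steps — taking inverses, traces, and the $\lim\sup$ — are then routine.
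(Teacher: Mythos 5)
Your proposal follows essentially the same route as the paper's own proof: reduce the trace bound to the lower bound via Theorem~\ref{the:theorem1}, secure a uniform upper envelope so that the second part of Lemma~1 applies with a single $\hat{\beta}$, unroll the recursion $\mathcal{I}_i(k|k) = \psi_{k-1}(\mathcal{I}_i(k-1|k-1)) + l_i$, discard the positive semi-definite leftover term, and take $\widetilde{\mathcal{I}}(k)$ to be the resulting discounted Gramian, which is exactly the paper's definition in~\eqref{Itildeeqn}. You are in fact more careful than the paper in two places: you derive the uniform envelope concretely as $\mathcal{I}_i(k|k) \preceq Q^{-1} + l_i$ where the paper merely asserts that $\mathcal{I}(k|k)$ ``can be uniformly bounded,'' and the crux you flag --- upgrading structural observability of the aggregate pair $(\bar A, \bar H)$ to positive definiteness of the per-node discounted Gramian built from $l_i$ alone --- is genuinely left unresolved in the paper as well, whose proof ends immediately after defining $\widetilde{\mathcal{I}}(k)$ without ever establishing $\widetilde{\mathcal{I}}(k) \succ 0$.
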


\begin{proof}
From the definition of $l_i$ in proof of Lemma~1, we have $\mathcal{I}_i(k|k) = \mathcal{I}_i(k|k-1) + l_i$. Hence, we can write
\begin{equation*}
    \mathcal{I}_i(k|k) = \psi_{k-1}(\mathcal{I}_i(k-1|k-1)) + l_i.
\end{equation*}
Further from Theorem 1, we have 
\begin{align*}
    \mathcal{I}_i(k|k) &\preceq {\mathbb{E}\{(x(k)-\hat{x}_i(k|k)){(x(k)-\hat{x}_i(k|k))}^{\top}\}}^{-1} \\&\preceq \mathcal{I}(k|k).
\end{align*}
 Recalling that $Q$ and $R_i, i \in [n]$ are positive definite and the sequence $\mathcal{I}(k|k)$ can be uniformly bounded by some constant matrix $\widetilde{\mathcal{I}}$, we can use the second statement of Lemma 1  to obtain the following lower bound
\begin{equation*}
    \mathcal{I}_i(k|k) \succeq \hat{\beta}(A(k-1)^{-1})^{\top}\mathcal{I}_i(k-1|k-1){A(k-1)}^{-1} + l_i.
\end{equation*}
By applying the above two inequalities recursively, we obtain
\begin{align}
    \mathcal{I}&_i(k|k) \nonumber\\ 
    &\succeq \hat{\beta}^{\bar{k}}((\prod_{i=1}^{\bar{k}}A(k-i))^{-1})^{\top}\mathcal{I}_i(k-\bar{k}|k-\bar{k})\prod_{i=1}^{\bar{k}}A(k-i))^{-1} \nonumber\\
    &+ \!\sum_{\tau=1}^{\bar{k}}\hat{\beta}^{\tau-1}((\prod_{i=1}^{\tau-1}\!A(k-i))^{-1})^{\top}l_i(\prod_{i=1}^{\tau-1}\!A(k-i))^{-1}.\!
\end{align}
Define 
\begin{equation}\label{Itildeeqn}
    \widetilde{\mathcal{I}}(k) = \sum_{\tau=1}^{\bar{k}}\hat{\beta}^{\tau-1}((\prod_{i=1}^{\tau-1}A(k-i))^{-1})^{\top}l_i(\prod_{i=1}^{\tau-1}A(k-i))^{-1},
\end{equation}
and the proof is then complete.
\end{proof}
Although the authors prove stability for a consensus based DKF algorithm~in \cite{BATTISTELLI2014707}, the stability result holds regardless of the number of consensus steps. Clearly, from Theorems 1 and 2, one can see that the estimation error is asymptotically bounded in mean square and the DKF algorithm generates estimates with bounded mean square errors for LTV systems.   {Theorems 1 and 2, in their present form can only be applied for systems in which $A(k)$ is invertible  for all $k$. Extension to the cases in which $A(k)$ is not invertible should be investigated further.}

\begin{algorithm}[!t]
\caption{$[nodes,MSE,d_{max}] = Greedy[\tau_{m},R_{m}]$}
	\label{alg:alg1}
	\begin{algorithmic}
		\STATE $nodes \longleftarrow \emptyset$
		\FOR{$k=1$ \TO $iter$}
		    \STATE $R_0(k) \longleftarrow R_{m}(1 - \frac{k-1}{iter})$
		    \STATE $\tau_0(k) \longleftarrow \tau_{m}(1 - \frac{k-1}{iter})$
		    \FOR{$i=1$ \TO $n$}
		        \IF{$R_i \leq R_0(k)$ \& $\tau_i \leq \tau_0(k)$}
		        \STATE $nodes(k) \longleftarrow nodes(k) \cup i$
		        \ENDIF
		        \STATE $[\hat{X},\mathcal{\hat{I}}]  \longleftarrow DKF[nodes(k)]$
		    \ENDFOR
		    \STATE $MSE(k) \longleftarrow \frac{1}{2}||\hat{x}-x||_2^2$
		    \STATE $d_{max}(k) \longleftarrow \frac{\mathrm{max}_i\{|\hat{x}_i-x_i|\}}{\mathrm{max}\{|x_i|\}}$
	    \ENDFOR
	    \RETURN $[nodes,MSE,d_{max}]$
	\end{algorithmic} 
\end{algorithm}

\section{Sensor Selection Algorithms}\label{cha:materials2}
\subsection{Greedy Subset Selection}
Typically a higher number of sensor nodes results in more accurate state estimates in an ideal scenario, however, it is not always true in the presence of delays. Simulations show that (Fig. \ref{fig:comp}) in the presence of delays, the accuracy of estimates may decrease when the number of sensor nodes increases. To mitigate the effects of delay, we propose to select a subset of sensor nodes such that the compromising effect of delays is reduced by lowering the number of sensors used for estimation, while the selected sensors render a satisfactory estimation accuracy. 
We measure the influence of delays on the state estimates by determining the relative Maximum Deviations (MD), defined as follows:
\begin{equation} \label{eqmaxdev}
    d_{max} = \frac{\underset{i}{\mathrm{max}}\{|\hat{x_i}-x_i|\}}{\underset{i}{\mathrm{max}}\{|x_i|\}}.
\end{equation} 
We measure the influence of the measurement noise using the Mean Squared Error (MSE) between the ideal and observed estimates, $x$ and $\hat{x}$, respectively. We compute the MSE from the time where the state variables settle to $1\%$ of their final values.  {Intuitively, the influence of measurement noise can be best noticed (independent of the influence of delays) when the system settles down to its equilibrium point.} 


In Algorithm~\ref{alg:alg1}, we perform a greedy search for the best subset of nodes such that the influence of delays and measurement noise is reduced while maintaining a desired state estimate accuracy level. In each iteration, the subset selection algorithm includes all those sensor nodes with possessing delays and measurement noise variances beneath their corresponding thresholds. The search iterations start from the highest possible values for $R$ and $\tau$. In each iteration, all the possible sensor nodes with lower measurement noise variances and delays are selected, based on those the DKF is implemented. 
The values of $R_m$ and $\tau_m$ in Algorithm~\ref{alg:alg1} correspond to the largest measurement noise covariance and delay, respectively.
\subsection{Subset Selection based on Stability Criterion}
We now propose a novel subset selection method  to choose a subset of sensor nodes from the available nodes to ensure the estimation error covariance matrix remains bounded asymptotically, i.e., when $k\to\infty$. The advantage of this subset selection method is that, unlike the greedy algorithm, there is no need to be aware of the true state values, while the selected subset of nodes ensures the stability of the DKF in the presence of delays and noises in the collected measurements. The pseudo-code for subset selection based on stability is shown in Algorithm \ref{alg:alg3}.

\begin{algorithm}[!t]
\caption{$[nodes] = StabilityCheck[n,\bar{k},N,\mathcal{T},T_s]$}
	\label{alg:alg3}
	\begin{algorithmic}
		\STATE $nodes \longleftarrow \emptyset$
		\FOR{$i=1$ \TO $n$}
		    \STATE $ct_{exp} \longleftarrow 0, \ ct_{act} \longleftarrow 0$
		    \FOR{$k=\bar{k}+1$ \TO $N$}
		        \STATE Calculate $\widetilde{\mathcal{I}}(k)$ from (\ref{Itildeeqn})
		        \IF{ $k - \frac{\tau_i}{T_s} > 0$ }
		            \STATE $ct_{exp} \longleftarrow ct_{exp} + 1$
		           \IF{$tr[{I}_i(k-\frac{\tau_i}{T_s}|k-\frac{\tau_i}{T_s})] > tr[\widetilde{\mathcal{I}}(k)]$}
		            \STATE $ct_{act} \longleftarrow ct_{act} + 1$
		           \ENDIF
		      \ENDIF
		    \ENDFOR
		    \IF{$ct_{exp} = ct_{act}$}
		        \STATE $nodes \longleftarrow nodes \cup i$
		    \ENDIF
	    \ENDFOR
	    \RETURN $nodes$
	\end{algorithmic}
\end{algorithm}

In Algorithm~\ref{alg:alg3}, $n$ is the total number of sensor nodes, $N$ is the total number of time samples, $\bar{k} \in [0,N]$ is the time instance from which the stability bound in Theorem \ref{the:theorem2} is satisfied,   {$\mathcal{T} = [\tau_1 \quad \tau_2 \quad \dots \quad \tau_n]^\top \in \mathbb{R}^{n}$} is a vector that contains the delays associated with the nodes, and $T_s \in \mathbb{R}^{+}$ is the sampling time of the filter. 
The essence of the algorithm is to calculate the value of the bounding matrix $\widetilde{\mathcal{I}}(k)$ and pick nodes $i$ such that the delayed IM associated with that node, i.e., ${I}_i(k-\frac{\tau_i}{T_s}|k-\frac{\tau_i}{T_s})$, is larger than the bound described in (\ref{Itildeeqn}), $\forall k$. The algorithm includes only these nodes in the final calculation of estimates by the estimator.

Algorithm~\ref{alg:alg3} always returns a set of nodes that satisfies the stability criterion in (\ref{Itildeeqn}), $\forall k > \bar{k}, k \leq N$   {and thus the order of the evaluation of the nodes by the algorithm does not affect its output}. The algorithm fails to return a non-empty set of nodes only when none of the nodes satisfies the stability criterion for the specified time horizon.  {This occurs when the delays are larger than the estimation horizon $N$.}

\begin{remark}
 {The computational complexity} of calculating $\widetilde{\mathcal{I}}(k)$ in Algorithm~\ref{alg:alg3} is $\mathcal{O}(\bar{k}m^3)$  {for each $k$},   {where $m$ is the dimension of the system as defined earlier}. 
\end{remark}

\begin{figure}[b!]
    \centering
    \includegraphics[width=6.6cm,height=4cm]{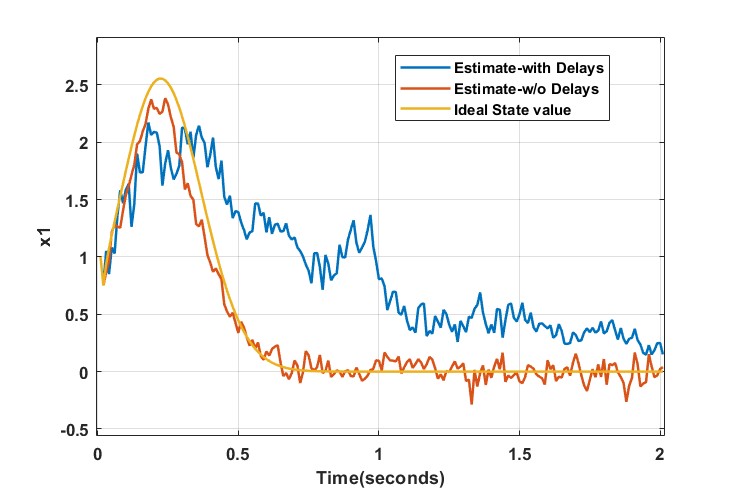}
    \caption{Estimates of state variable $x_1(k)$ with and without delays. Notice an MD higher than 100\% around 0.5 seconds compared to the ideal response.}
    \label{fig:comp}\vspace{-5mm}
\end{figure}
\begin{figure}[!t]
    \centering
    \begin{subfigure}[c]{0.49\linewidth}
    \includegraphics[width=4.3cm,height=3.1cm]{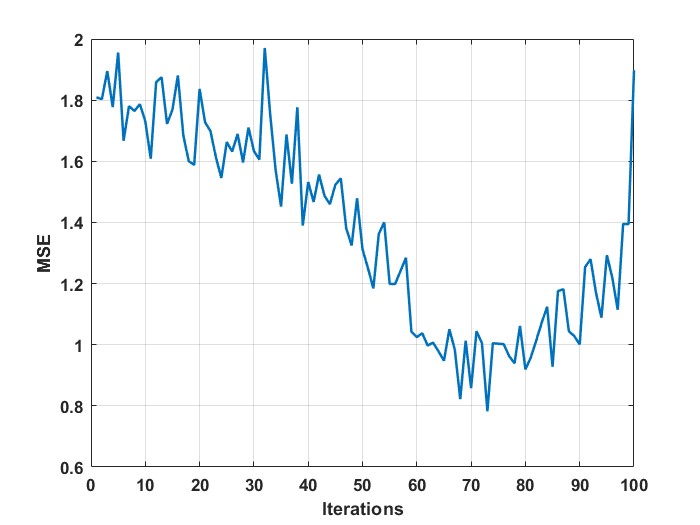}
    \end{subfigure}
    \begin{subfigure}[c]{0.49\linewidth}
    \includegraphics[width=4.3cm,height=3.1cm]{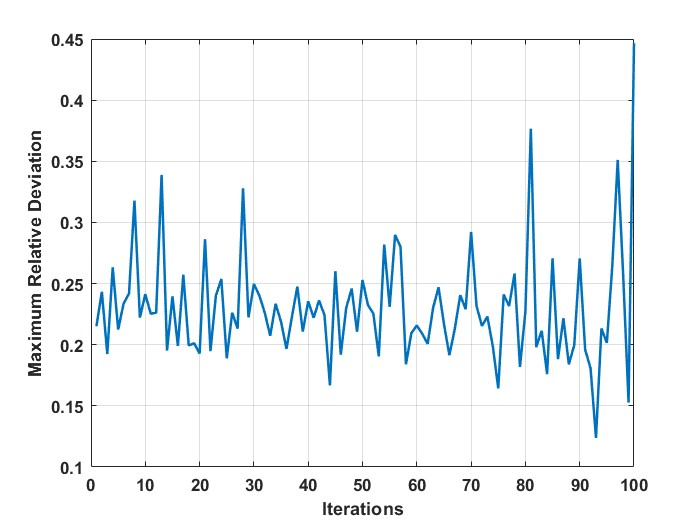}\vspace{-2mm}
    \end{subfigure}
    \caption{Implementing the greedy sensor subset selection (Algorithm 1): (a) Variation of MSE and (b) variation of MD on reducing the threshold. The values are ensemble means.}
    \label{fig:alg2}
\end{figure}

\vspace{-1 mm}
\section{Results and Discussions}\label{cha:results}
Consider the problem of DKF for the following LTV stochastic system modelled in discrete time as
\begin{align*}
    x(k+1)& = 
    \begin{bmatrix}
    0.5 & 0.25\\
    0.25 & 2^{-t_k}
    \end{bmatrix}
    x(k)  + w(k),\\
    z_i(k)&=H_ix(k)+v_i(k),
\end{align*}
with $H_i$ being either $\begin{bmatrix}
0 & 1
\end{bmatrix}$ or $\begin{bmatrix}
1 & 0
\end{bmatrix}$, $t_k\in [0,2]$, 
and the sampling time of the sensors is $T_s = 0.01$. We assume that there exists a transmission delay within fixed bounds from each filter node to the estimator. In this study, we consider a relatively large set of sensors by setting $n=2000$ sensor nodes. The initial condition of the system is set to $x_0=\begin{bmatrix}
1 & 1
\end{bmatrix}^{\top}$, and the system disturbance process follows $w_{k} \sim \mathcal{N}(0_{2 \times 1},0.1I_2), \forall k \geq 0$. The measurement noise for the sensor nodes is Gaussian distributed with zero mean and variances belonging to the range $[0,0.5]$ . Communication delays are assumed to be known, belonging to the range $[0,2]$ (in seconds). 
 The influence of delays on the state estimates tends to be more pronounced by considering more nodes, which may cause undesired deviations in the transient behaviour. An illustration is shown in Fig. \ref{fig:comp}. Moreover, one can observe deviations of the estimated states from their ideal counterparts in the steady-state region. This is due to higher measurement noise variances from several sensor nodes. 

The LTV system in this example, has only one time-varying element in $a_{22} = 2^{-t_k}$. Thus the structural matrices for each time step in finite time can be written as
\begin{equation*}
    \bar{A} = \begin{bmatrix}
    \star & \star\\
    \star & \star
    \end{bmatrix}
    \quad \bar{H_i} = \begin{cases}
    \begin{bmatrix}
    0 & \star\end{bmatrix},& \text{if } H_i = \begin{bmatrix}
    0 & 1\end{bmatrix}, \\
    \\
    \begin{bmatrix}
    \star & 0\end{bmatrix},& \text{if } H_i = \begin{bmatrix}
    1 & 0\end{bmatrix}.
\end{cases} 
\end{equation*}
Structural observability in~\cite{montanari}, therefore,  trivially holds for the described LTV system, since the states can be reached from one another and $a_{22} \to 0$ only if $t \to \infty$ while at least one of the states is being measured given $H_i$ being either $\begin{bmatrix}
0 & 1
\end{bmatrix}$ or $\begin{bmatrix}
1 & 0
\end{bmatrix}$. Thus, the considered system is structurally observable 
over $[t_0,t], \forall t_0,t$.

\begin{figure}[!b]
    \centering
    \includegraphics[width=6.6cm,height=4.3cm]{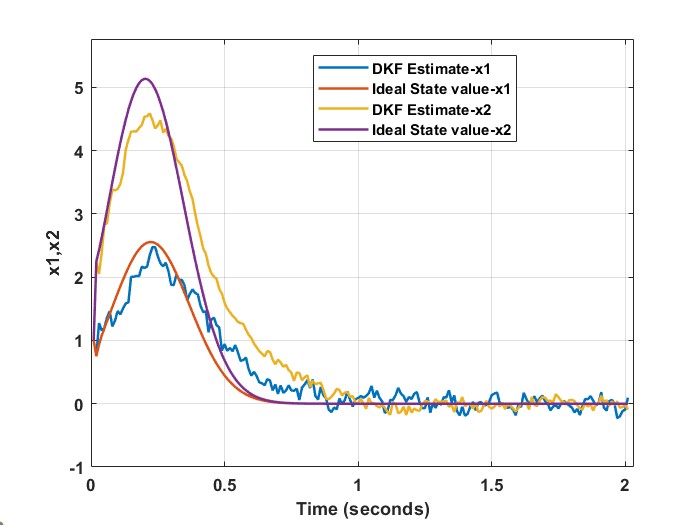}\vspace{-2mm}
    \caption{State estimates (Algorithm 1) corresponding to the lowest MSE from Fig.~\ref{fig:alg2}. The thresholds with the lowest MSE lead to the selection of 559 sensor nodes out of 2000.}
    \label{fig:alg2comp}
\end{figure}

Applying Algorithm 1, as expected, by reducing the threshold values for measurement noise and delay, we observe a decreasing trend in MSE, as can be seen in Fig.~\ref{fig:alg2}a. However, further reduction (after the $73^{rd}$ iteration) of the thresholds results in higher MSE. This is because by lowering the thresholds too low, the number of sensors satisfying the threshold conditions becomes less as well resulting in less accurate estimates. 
According to Fig.~\ref{fig:alg2}, there exists an optimal threshold where  {there is a trade-off between the} number of selected sensors and delays along with noise levels. This observation indeed confirms the results in~\cite{1188769} wherein the threshold-based policies where shown to produce optimal performance when thresholds are selected properly.

\begin{figure}[!t]
    \centering
    \includegraphics[width=6.6cm,height=4.3cm]{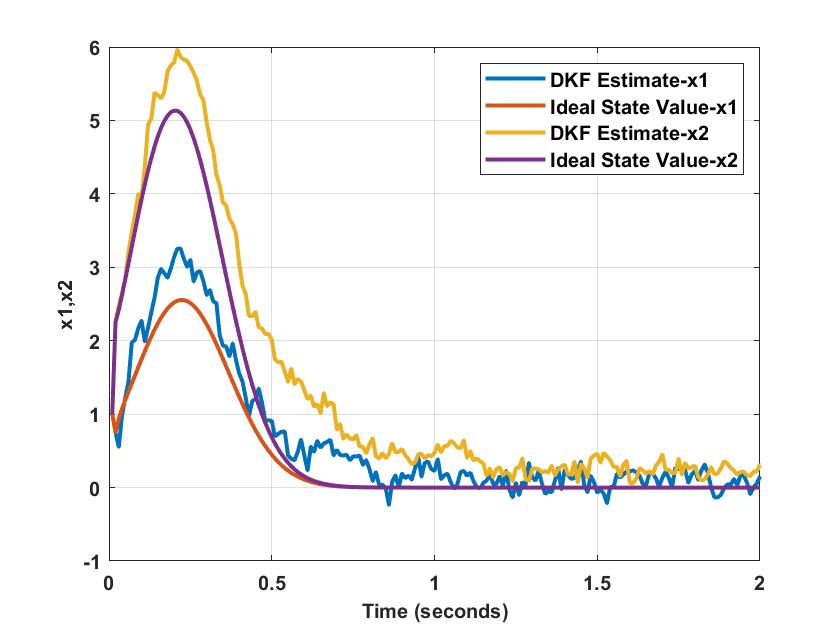}\vspace{-2mm}
    \caption{State estimates (Algorithm 2) with $n=2000$, 192 nodes selected, returning an MSE $=2.3871$, MD $=0.2593$.}
    \label{fig:alg3sim}
\end{figure}

While it appears that the magnitude of the local minima for MD decreases with reducing the threshold (Fig.~\ref{fig:alg2}b), unlike the case of MSE, one cannot observe a unique decreasing trend for MD. One similarity with respect to the MSE in Fig.~\ref{fig:alg2}a is that we observe a peak in MD at the end of all iterations owing to too few nodes for estimation.

  {Implementing Algorithm~\ref{alg:alg3} to select a subset of nodes that guarantee stability bounds, results in an estimate comparable to that of Algorithm~\ref{alg:alg1}. 
} When applying Algorithm~\ref{alg:alg3} and to ensure desired stability margin, $\hat{\beta}$ should satisfy (\ref{betahat}) when calculating $\widetilde{\mathcal{I}}(k)$. Furthermore, a reasonably small value of $\bar{k}$ has to be chosen to ensure that the provided stability specifications are achievable. The estimates obtained from Algorithm~\ref{alg:alg3} are illustrated in Fig.~\ref{fig:alg3sim}.

Further, Algorithm~\ref{alg:alg3} is also tested to select a subset of sensor nodes with stochastic delays. Each sensor node was subject to a constant and additive stochastic delay randomly selected from $\mathcal{N}(0,2T_s)$. Over a Monte Carlo study of $25$ runs, it is observed that the estimates calculated from the subset of nodes with stochastic delays were comparable to the estimates computed from the subset of nodes with constant delays. A slight decrease in the MSE compared to the constant delay case is observed, which can be attributed to the lower number of selected sensor nodes. The average MSE over the trials is around 1.96 with a variance of 0.03.
Moreover, the MD shows a slight increase compared to the constant delay case, with an average of 0.29 and a variance of 0.02. The number of nodes selected varied across the different runs, with an average of 22 nodes over the runs and a high variance of 14. This is expected since the number of nodes that provide a stable estimate depends on delays. Fig.~\ref{fig:alg3simrd} illustrates the simulation results of the estimates on the trials with the lowest MD. 

The performances using Algorithms 1 and 2 are similar in terms of estimating the state variables, as seen from Figs. \ref{fig:alg2comp} and \ref{fig:alg3sim}. 
\begin{figure}[!t]
    \centering
    \includegraphics[width=6.6cm,height=4.1cm]{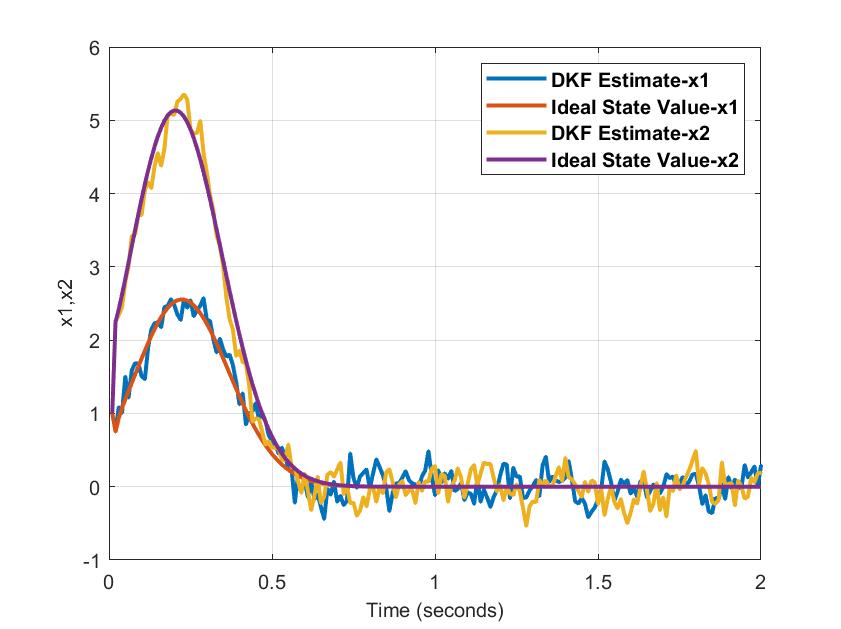}\vspace{-1mm}
    \caption{State estimates of $x_1(k)$ and $x_2(k)$ with $n=2000$ in the presence of stochastic delays. Algorithm 2, selected 20 nodes returns MSE $=1.92$ and MD $=0.11$.}
    \label{fig:alg3simrd}
\end{figure}
A noticeable advantage is that Algorithm~\ref{alg:alg3} results in the selection of sensor node subsets with far fewer nodes, compared to Algorithm~\ref{alg:alg1}.
\vspace{-1mm}
\section{Conclusions}\label{cha:conclusion}
We propose a subset selection algorithm to choose a set of sensor nodes that are deemed suitable for distributed Kalman Filter state estimation in the presence of stochastic communication delays   {for LTV Systems}. The algorithm is based on minimizing the number of sensor nodes in the subset while ensuring the stability of the state estimates. We then compare our algorithm with a greedy subset selection method that acts as a benchmark. Simulation results show comparable performance between our proposed approach and the greedy algorithm.   {Directions for future research include investigating stability-based node selection algorithms for DKF in partially observable LTV systems and extending the stability results under network-induced reliability and resource limitation phenomena.  
}

\bibliographystyle{unsrt}
\bibliography{root} 

\end{document}